\documentclass[conference]{IEEEtran}
\usepackage{courier}
\usepackage[usenames,dvipsnames]{color} 
\usepackage{amssymb,amsmath}
\usepackage{graphicx}
\usepackage{colordvi}
\usepackage{epsfig}
\usepackage{endnotes}
\usepackage{verbatim}
\usepackage{subfigure}
\usepackage{textcomp}
\usepackage{subfig}
\usepackage{setspace}
\usepackage{xcolor}
\usepackage{amsmath}
\usepackage{cite}
\usepackage{epstopdf}
\usepackage{times}
\usepackage{tabu}
\usepackage{booktabs, multirow}

\usepackage{booktabs}
\usepackage[super]{nth}
\usepackage{amsthm}
\usepackage{algorithmicx,algpseudocode}
\usepackage{algpseudocode,algorithm,algorithmicx}
\usepackage{amssymb}
\usepackage{mathtools}

\usepackage{color}
\usepackage{soul}

\algrenewcommand\algorithmicrequire{\textbf{Precondition:}}
\algrenewcommand\algorithmicensure{\textbf{Postcondition:}}

\theoremstyle{definition}

\newtheorem{lemma}{Lemma}

\def\BibTeX{{\rm B\kern-.05em{\sc i\kern-.025em b}\kern-.08em
		T\kern-.1667em\lower.7ex\hbox{E}\kern-.125emX}}

\begin{document}
	\title{Cache Allocations for Consecutive Requests of Categorized Contents: Service Provider's Perspective} 
	
	\author{\IEEEauthorblockN{Minseok Choi$^{\dag\ast}$, Andreas F. Molisch$^{\dag}$, Dong-Jun Han$^{\ddagger}$, Joongheon Kim$^{\ast}$, and Jaekyun Moon$^{\ddagger}$}
		\IEEEauthorblockA{
			$^{\dag}$Ming Hsieh  Department of Electrical and Computer Engineering, University of Southern California, Los Angeles, USA \\
			$^{\ddagger}$School of Electrical Engineering, Korea Advanced Institute of Science and Technology (KAIST), Daejeon, South Korea \\
			$^{\ast}$School of Electrical Engineering, Korea University, Seoul, South Korea\\
			E-mails: 
			\texttt{choimins@usc.edu}, 
			\texttt{molisch@usc.edu}
			\texttt{djhan93@kaist.ac.kr},
			\texttt{joongheon@korea.ac.kr},\\
			\texttt{jmoon@kaist.edu}
		}
	}
	
	\maketitle
	
	\begin{abstract}
		In wireless caching networks, a user generally has a concrete purpose of consuming contents in a certain preferred category, and requests multiple contents in sequence. 
		While most existing research on wireless caching and delivery has focused only on one-shot requests, the popularity distribution of contents requested consecutively is definitely different from the one-shot request and has been not considered. 
		Also, especially from the perspective of the service provider, it is advantageous for users to consume as many contents as possible. 
		Thus, this paper proposes two cache allocation policies for categorized contents and consecutive user demands, which maximize 1) the cache hit rate and 2) the number of consecutive content consumption, respectively. 
		Numerical results show how categorized contents and consecutive content requests have impacts on the cache allocation.
	\end{abstract}
	
	\IEEEpeerreviewmaketitle
	
	\section{Introduction}\label{sec:intro}
	
	In multimedia services, e.g., on-demand streaming services, a relatively small number of popular contents generally occupies a large portion of the massive global data traffic, and most of user demands are overlapped and repeated \cite{youtube}.
	To deal with this issue, wireless caching technologies have been studied, wherein the base station (BS) or the server pushes popular contents for off-peak hours to cache-enabled nodes so that these nodes provide contents directly to nearby mobile users during peak hours~\cite{femtocaching}.
	In practice, caching nodes (i.e., caching helpers and/or cache-enabled devices) have finite storage sizes, which leads the content placement problem to determine which content is better to be stored in caching nodes \cite{TIT2013shanmugam}.
	
	The goal of the content placement problem is to find optimal caching policies according to the popularity distribution of contents and network topology.
	In stochastic wireless caching networks, there exist research efforts on probabilistic content placement introduced in \cite{ICC2015blaszczyszyn,CL2017chen}. 
	Many probabilistic caching methods have been proposed for various systems, e.g., device-to-device (D2D) networks \cite{JSAC2016ji}, $N$-tier hierarchical networks \cite{TWC2018Li}, multi-quality dynamic streaming \cite{JSAC2018Choi}, and probabilistic coded caching was also recently proposed in \cite{TWC2019Ko}.
	
	Previous research optimized content placement for users requesting only one content, i.e., one-shot request, under the assumption that all content requests are independent. 
	In multimedia services, the user typically accesses a service platform with the purpose of consuming a specific category of the content, and generally
	consumes more than one content. 
	In this case, the sequence of consecutively consumed contents is highly correlated. 
	For example, in video services such as YouTube, the related video list is recommended to the user after the first content is consumed \cite{ToN2009Cha}.
	In particular, the view count of a given video varies in almost the same scale as the average view count of the top referrer videos in the related list \cite{IMC2010Zhou}; therefore, a user is highly likely to request one of the videos in the related category.
	Therefore, this paper considers the scenario in which a user consecutively requests multiple contents that are likely to be in its preferred category.
	In this context, this paper proposes two cache allocation policies for categorized contents and consecutive user demands, which maximize the cache hit rate and the expected number of consecutive content consumptions, respectively.
	
	The previous work of \cite{ICC2019Choi} has proposed a caching policy for consecutive user demands with the assumption that the number of content requests is fixed. 
	This assumption does not allow the service provider to maximize user's content consumption;
	however, in the perspective of service providers, it is important to satisfy as many of the user's requests as possible.
	In this paper, each user determines whether to continue to consume more contents depending on cache states in its vicinity, and the service provider aims at making users stay in the service longer.
	
	The main contributions are as follows:
	\begin{itemize}
		\item Different from most existing results on wireless caching in which one-shot requests are considered only, consecutive requests of categorized contents are considered.
		In practice, the sequence of consecutively consumed contents is highly correlated, and an advanced caching scheme is required.
		
		\item Based on real data set, the recent work of \cite{ToN2019Lee} has modeled the category-based conditional content popularity distribution. 
		This paper uses this measurement-based popularity model to obtain the proposed cache allocation rule for consecutive requests of categorized contents.
		
		\item This paper proposes two cache allocation schemes which maximize cache hit rates and the expected number of consecutive content requests, respectively.
		The iterative algorithm is presented to find the optimal cache allocation rules and its convergence is proved.
		
		\item Numerical results show how 1) the popularity concentration to the preferred category and 2) different numbers of contents in the different categories influence the cache allocation rule.
	\end{itemize}
	
	The rest of the paper is organized as follows.
	The system model is described in Section \ref{sec:system_model}. 
	The cache allocation rules maximizing the cache hit probability and the number of consecutive content consumption are proposed in Sections \ref{sec:max_cache_hit_prob} and \ref{sec:max_video_number}, respectively.
	The numerical results are shown in Section \ref{sec:numerical_results} and Section \ref{sec:conclusion} concludes the paper.
	
	\section{System Model}
	\label{sec:system_model}
	
	\subsection{Wireless Caching Network}
	
	This paper assumes that caching nodes are randomly distributed according to a general spatial distribution $\Phi$, and the server which has a content library $\mathcal{N}$ pushes some popular contents to each caching node during off-peak hours. 
	Suppose that a library $\mathcal{N}$ consists of $N$ contents and all contents have a normalized unit size. 
	Let all $N$ contents be grouped into $K$ categories, and $N_i$ contents are in category $i$ denoted by $\mathcal{C}_i$, for all $i\in\mathcal{K} = \{1,\cdots,K \}$, satisfying $\sum_{i=1}^K N_i = N$.
	Also, denote the content index set of $\mathcal{C}_i$ by $\mathcal{N}_i = \{1,\cdots,N_i \}$.
	
	The caching nodes have the finite storage size $M$, which means only $M$ contents can be cached in each node. 
	Since $N>M$ in practice, caching nodes store a part of contents in $\mathcal{N}$.
	A user requests the content from caching nodes in its vicinity. 
	If the user finds at least one caching node that stores the desired content, this case is called the cache hit. 
	When the user requests multiple contents, we define the cache hit as the case where {\em all} of requested contents can be found in nearby caching nodes. 
	When there is no caching node having some of the requested contents, the server can deliver them via a cellular link.
	However, this paper assumes that the transmission quality of the cellular link is insufficient due to delay and/or congestion that leads to unacceptable video quality, so that henceforth we do not consider direct transmission from the server.
	
	Let the storage size $M$ be divided into $K$ fractions with unequal sizes denoted by $\alpha_i$ for all $i\in\mathcal{K}$, and contents in $\mathcal{C}_i$ are stored within $\alpha_i$. 
	These fractions will be called cache allocations for categories and satisfy $\sum_{i=1}^K \alpha_i \leq M$ and $\alpha_i \leq N_i$.
	Given all of $\boldsymbol{\alpha}=(\alpha_1,\cdots,\alpha_K)$, how to store individual contents within each category becomes a classical content placement problem, and we consider the probabilistic caching policy for individual contents as shown in \cite{ICC2015blaszczyszyn,CL2017chen}.
	
	\begin{table}[t!]%
		\small
		\caption{Key notations}
		\label{table:parameters}
		\vspace{-3.5mm}
		{\footnotesize
			\begin{center}
				\begin{tabular}{|c|l|}
					\hline
					$K$ & Number of categories \\
					\hline
					$N_i$ & Number of contents in category $i$ \\
					\hline
					$k$ & Index of the preferred category \\
					\hline
					$M$ & Cache size \\ 
					\hline
					$\alpha_i$ & Cache allocation for category $i$ \\ 
					\hline 
					$f_i$ & Global popularity of category $i$ \\
					\hline
					$p_1$ & Popularity of the preferred category \\
					\hline
					$r$ & Rank of category that requests contents \\
					\hline
					$l$ & Number of consecutive content requests \\ 
					\hline
					$l_r$ & Number of requested contents in the $r$-ranked category \\
					\hline
					$i_r$ & Category index of the $r$-ranked category \\
					\hline
					$\epsilon$ & Probability of not requesting the next content \\
					\hline
				\end{tabular}
			\end{center}
		}
	\vspace{-5mm}
	\end{table}

	\subsection{Content Popularity Model}
	
	This paper focuses on the scenario in which the user requests multiple contents consecutively, different from most of existing caching policies which considered only one-shot requests.
	A representative example is a video streaming service. 
	For example, a user can access the service platform with a concrete purpose of watching some sports highlight clips.
	In this case, we can postulate that sports is the user's preferred category, therefore the probability of requesting sports videos in sequence is very high.
	In contrast, the probability of requesting contents in other categories, e.g., movie trailers, is very small although not zero. 
	
	Accordingly, the content request can be modeled by the following steps. 
	First, the user randomly picks one category in $\mathcal{K}$. 
	Each category $i \in \mathcal{K}$ has a global category popularity, which follows the Zipf distribution \cite{ICC2015blaszczyszyn}: $f_i = i^{-\gamma}/\sum_{j=1}^K j^{-\gamma}$  where $\gamma$ denotes the popularity distribution skewness.
	Then, the selected category has the first rank among all categories; note that the global category popularity is only used for choosing the first rank. 
	Other categories can have any rank except for the first rank, but this paper models all categories that are not the first for this particular user as statistically equivalent; in other words, the relative ranking from $2,\cdots,K$ does not matter.
	After determining the preferred category, the user chooses one of categories to request the content depending on their ranks. 
	Again, the category rank distribution given the preferred category is assumed to follow the Zipf distribution, i.e., $\mathrm{Pr}\{R=r \} = r^{-\gamma^{\text{out}}}/\sum_{j=1}^K j^{-\gamma^{\text{out}}}$, which represents the popularity of the $r$-th ranked category and $\gamma^{\text{out}}$ is the Zipf factor.
	We denote the popularity of the preferred category by $p_1 = \mathrm{Pr}\{R=1\}$. 
	Note that $p_1$ is the probability of staying within the given preferred category, not the general probability of picking the 1st-ranked category as in \cite{ToN2019Lee}, which is a different quantity. 
	Here, we also consider the situation in which the user can stop to request contents by itself with small probability of $\epsilon$.
	Therefore, the probability of requesting any content in the $r$-th ranked category after consuming the first content becomes $(1-\epsilon) \cdot \mathrm{Pr}\{R=r \}$. 
	
	After choosing the category rank, the user requests the specific content in the category having the chosen rank.
	According to \cite{ToN2019Lee}, 
	the category-based conditional popularity distribution of contents in $\mathcal{C}_{i}$ follows the Mandelbrot-Zipf (M-Zipf) distribution, 
	i.e., 
	\begin{equation}
	a_{i,n} = \frac{\frac{1}{(n+c^{\text{in}})^{\gamma_i^{\text{in}}}}}{ \sum_{m=1}^{N_i} \frac{1}{(m+c^{\text{in}})^{\gamma_i^{\text{in}}}} },
	\end{equation} 
	which represents the popularity of the $n$-th content in $\mathcal{C}_i$ for $n\in \mathcal{N}_i$.
	$\gamma_i^{\text{in}}$ and $c^{\text{in}}$ are the Zipf factor and the plateau factor of $\mathcal{C}_i$, respectively. 
	Here, if $\gamma^{\text{out}}$ is sufficiently large, $p_1 \gg 1-\epsilon-p_1$ and the popularity of contents in the preferred $\mathcal{C}_{k}$ is much larger than that of any content in $\mathcal{C}_i$ for all $i\neq k$. 
	Fig. \ref{fig:category-based popularity} shows popularity distribution of 100 contents given the preferred category, grouped into 5 categories consisting of 20 contents. 
	This figure is obtained by multiplying the rank probability and the category-based individual content popularity, when $\gamma^{\text{out}}=5$, $\gamma_i^{\text{in}} = 2.4$ and $c^{\text{in}}=69$.
	Among them, contents whose indices are from 1 to 20 belong to the first-ranked category, and their popularity is relatively much larger than others.
	Therefore, if $\gamma^{\text{out}}$ is sufficiently large, we approximate the popularity of contents outside the preferred $\mathcal{C}_{k}$ as uniform distribution, i.e., $a^{k}_{i,n} \approx \frac{1}{N-N_{k}}$ for all $i \in\mathcal{K} \setminus \{k\}$ and $n\in \mathcal{N}_i$, irrespective of ranks of those categories.
	When given $\mathcal{C}_k$, the popularity of contents in $\mathcal{C}_k$ is $a^k_{k,n} = a_{k,n}$ for $n\in\mathcal{N}_k$ still. 
	Thus, consideration of two exclusive sets of the preferred category and all other contents is reasonable.
	
	\section{Maximization of Cache Hit Probability}
	\label{sec:max_cache_hit_prob}
	
	This section derives the cache hit probability and proposes a cache allocation rule that maximizes the cache hit probability.
	
	\subsection{Cache hit probability}
	
	Suppose that the user request $l$ contents in sequence. 
	Among $l$ contents, let $l_{r}$ contents belong to the $r$-th ranked category satisfying $\sum_{r=1}^K l_{r} = l$. 
	Then, when the preferred category $\mathcal{C}_{k}$ is given, the cache hit probability given $l$ content requests, i.e., the probability that all of $l$ requested contents can be delivered from any caching node, can be expressed as 
	\begin{equation}
	p_{\text{hit}}^{l,k} = \prod_{r=1}^{K} \Big[\mathrm{Pr}\{R=r \} h^k_{i_r}(\alpha_{i_r}) \Big]^{l_{r}},
	\end{equation}
	where $i_r \in \mathcal{K}$ is the index of the $r$-th ranked category and 
	\begin{equation}
	h^{k}_i(\alpha_i) = 1 - \sum_{n=1}^{N_i} a^{k}_{i,n} \sum_{j=0}^{\infty} \mathrm{Pr}_{\Phi}\{J=j \} (1-b_{i,n}(\alpha_i) )^j \label{eq:h_k},
	\end{equation}
	is the cache hit probability of a content request within $\mathcal{C}_i$ given $\alpha_i$ when $\mathcal{C}_{k}$ is the preferred category. 
	In Eq. \eqref{eq:h_k}, $\mathrm{Pr}_{\Phi}\{J=j \}$ is the probability that there are $j$ caching nodes storing the requested content in the vicinity of the user.
	Also, $b_{i,n}(\alpha_i)$ is the caching probability of the $n$-th content in $\mathcal{C}_i$ given $\alpha_i$. 
	However, computations required for scanning all combinations of $l
	_{r}$ values are exponentially increasing as $l$ grows.
	
	When $\gamma^{\text{out}}$ is large, $p_{\text{hit}}^{l,k}$ is simplified by using approximations of $a^{k}_{i,n} \approx \frac{1}{N-N_{k}}$, $\forall i \in\mathcal{K} \setminus \{k\}$ and $n\in \mathcal{N}_i$ into
	\begin{equation}
	p_{\text{hit}}^{l,k} \approx \sum_{m=0}^{l} \binom{l}{m} [p_1 h^{k}_{k}(\alpha_{k})]^m \cdot [(1-\epsilon-p_1) q_{k}(\boldsymbol{\alpha})]^{l-m}, \label{eq:cache hit given lk}
	\end{equation}
	where 
	\begin{align}
	q_{k}(\boldsymbol{\alpha}) = 1 - \sum_{\substack{i=1 \\ i \neq k}}^{K} \sum_{n=1}^{N_i} \frac{1}{N-N_{k}} \sum_{j=0}^{\infty} \mathrm{Pr}_{\Phi}\{J=j \} (1 - b_{i,n}(\alpha_i))^j \label{eq:q_k}
	\end{align}
	which is the cache hit probability of a content request outside $\mathcal{C}_{k}$.
	Each term in Eq. \eqref{eq:cache hit given lk} is the probability that among $l$ requested contents, $m$ are in $\mathcal{C}_{k}$ and $l-m$ contents are outside $\mathcal{C}_{k}$, and all of $l$ contents can be found in caching nodes in the vicinity of the user. 
	For simplicity, we will use the notation $h_{k}(\alpha_{k}) = h_{k}^{k}(\alpha_{k})$ in the following sections.
	
	The expected cache hit probability can be finally derived as 
	\begin{align}
	p_{\text{hit}} &= \sum_{k=1}^{K} f_{k} \sum_{l=1}^{\infty} \mathrm{Pr}\{L=l \} \cdot \sum_{m=0}^{l} \binom{l}{m} [p_1 h_{k}(\alpha_{k})]^m \nonumber \\ 
	&~~~~~\cdot [(1-\epsilon-p_1) q_{k}(\boldsymbol{\alpha})]^{l-m},
	\end{align}
	where $\mathrm{Pr}\{L=l \}$ is the probability that the user requests $l$ contents in sequence, which is given by 
	\begin{equation}
	\mathrm{Pr}\{L=l \} = \epsilon (1-\epsilon)^l.
	\end{equation}
	Therefore, the cache hit probability is arranged into
	\begin{align}
	p_{\text{hit}} &= \sum_{k=1}^{K} f_{k} \sum_{l=1}^{\infty} \epsilon (1-\epsilon)^l (p_1 h_{k}(\alpha_{k}) + (1-\epsilon - p_1) q_{k}(\boldsymbol{\alpha}) )^l \\
	&= \sum_{k=1}^{K} \frac{f_k (1-\epsilon)\epsilon (p_1 h_k(\alpha_k) + (1-\epsilon-p_1) q_k(\boldsymbol{\alpha})) }{1 - (1-\epsilon) (p_1 h_k(\alpha_k) + (1-\epsilon-p_1) q_k(\boldsymbol{\alpha}))}
	\label{eq:cache_hit}
	\end{align}
	
	\begin{figure} [t!]
		\centering
		\includegraphics[width=0.28\textwidth]{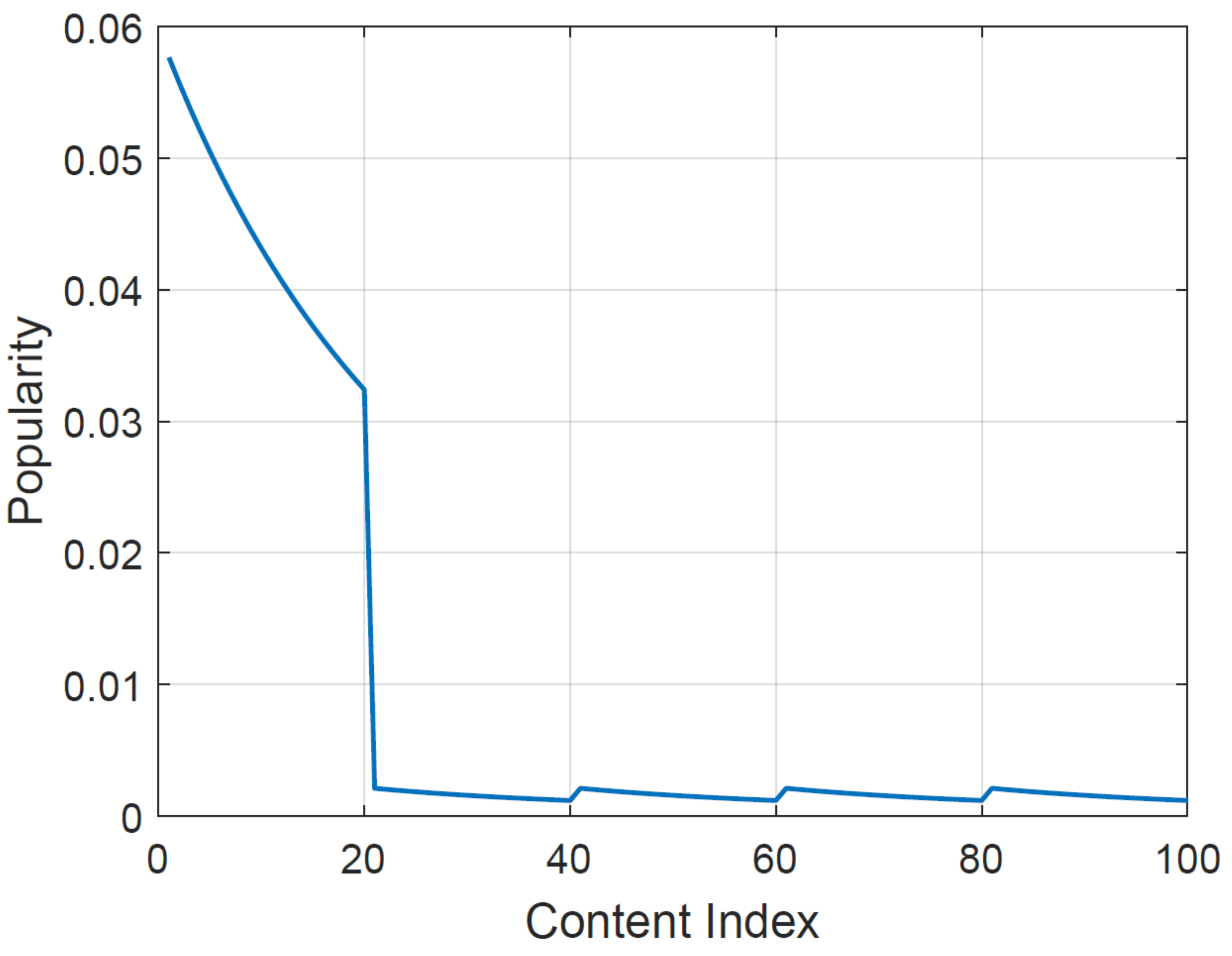}
		\caption{Popularity of contents given the preferred category}
		\label{fig:category-based popularity}
		\vspace{-2mm}
	\end{figure}
	
	In Eq. \eqref{eq:cache_hit}, any caching policy can be utilized within each category given the preferred $\mathcal{C}_k$ and $\boldsymbol{\alpha}$, and $h_k(\alpha_k)$ and $q_k(\boldsymbol{\alpha})$ are determined depending on the caching policy.
	Then, we can suppose that the caching policy that maximizes the cache hit rate \cite{ICC2015blaszczyszyn,CL2017chen} is used for caching of individual contents within every category.
	Denote the maximum cache hit rates in and outside $\mathcal{C}_k$ by $h_k^*(\alpha_k)$ and $q_k^*(\boldsymbol{\alpha})$, respectively.
	Then, the cache hit probability of $l$ content requests becomes
	\begin{align}
	p^*_{\text{hit}} &= \sum_{k=1}^{K} \frac{f_k (1-\epsilon)\epsilon (p_1 h^*_k(\alpha_k) + (1-\epsilon-p_1) q^*_k(\boldsymbol{\alpha})) }{1 - (1-\epsilon) (p_1 h^*_k(\alpha_k) + (1-\epsilon-p_1) q^*_k(\boldsymbol{\alpha}))} \\
	&= \sum_{k=1}^{K} g^*_k(\boldsymbol{\alpha}, f_k, \epsilon). \label{eq:cache_hit_final}
	\end{align}
	
	\subsection{Problem Formulation}
	
	The optimal cache allocations of $\boldsymbol{\alpha}^{\star}=(\alpha_1^{\star}, \cdots, \alpha_K^{\star} )$ can be obtained by maximizing \eqref{eq:cache_hit_final} as follows:
	
	\begin{align}
	&\boldsymbol{\alpha}^{\star} = \underset{\boldsymbol{\alpha}}{\arg\max}~~p^*_{\text{hit}} \label{eq:opt1_obj} \\
	&~~~\text{s.t.}~\sum_{i=1}^K \alpha_i \leq M \label{eq:opt1_const1} \\
	&~~~~~~~~0\leq \alpha_i \leq \min\{M, N_i\},~\forall i\in\mathcal{K}. \label{eq:opt1_const2}
	\end{align}
	
	The constraint \eqref{eq:opt1_const1} is for the storage size of each caching node and the constraint \eqref{eq:opt1_const2} is for the cache allocation for each category.
	The following key lemmas are used to solve the above problem of \eqref{eq:opt1_obj}--\eqref{eq:opt1_const2}.
	
	\begin{lemma}
		$p_1 h_k(\alpha_k) + (1-\epsilon-p_1) q_k(\boldsymbol{\alpha})$ is increasing with $\alpha_k$.
		\label{lemma1}
	\end{lemma}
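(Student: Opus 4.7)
The plan is to reduce the statement to a monotonicity property of $h_k(\alpha_k)$ alone, since the second term in the sum does not actually depend on $\alpha_k$. First I would inspect the definition of $q_k(\boldsymbol{\alpha})$ in Eq. \eqref{eq:q_k}: the outer sum over $i$ explicitly excludes $i=k$, and each term inside is a function only of $b_{i,n}(\alpha_i)$ for $i\neq k$. Hence $q_k(\boldsymbol{\alpha})$ is independent of $\alpha_k$, and as $\alpha_k$ varies with the other $\alpha_i$'s held fixed, the quantity $(1-\epsilon-p_1)q_k(\boldsymbol{\alpha})$ is a constant. The problem therefore reduces to showing that $h_k(\alpha_k)$ is non-decreasing in $\alpha_k$, since $p_1 \geq 0$.

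Next I would establish monotonicity of $h_k(\alpha_k)$ by viewing it as the optimal value of an optimization problem. By the remark preceding the lemma, the per-category caching probabilities $\{b_{k,n}\}_{n=1}^{N_k}$ are chosen to maximize the cache hit rate subject to the constraints $0 \leq b_{k,n} \leq 1$ and $\sum_{n=1}^{N_k} b_{k,n} \leq \alpha_k$, using the probabilistic caching policy of \cite{ICC2015blaszczyszyn,CL2017chen}. Writing
\[
h_k(\alpha_k) = \max_{\{b_{k,n}\}} \Bigl[ 1 - \sum_{n=1}^{N_k} a_{k,n}\sum_{j=0}^{\infty} \mathrm{Pr}_\Phi\{J=j\}(1-b_{k,n})^j \Bigr],
\]
I would argue that the feasible set grows as $\alpha_k$ increases: any $\{b_{k,n}\}$ feasible for a smaller budget remains feasible for a larger one. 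Since the maximum of a fixed objective over a nested family of sets is non-decreasing in the set, $h_k(\alpha_k)$ is non-decreasing in $\alpha_k$. (Strict monotonicity holds whenever $\alpha_k < N_k$, because the objective is strictly decreasing in each $b_{k,n}$ when $\mathrm{Pr}_\Phi\{J\geq 1\}>0$, so any added budget can be put to productive use.)

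Combining the two observations, $p_1 h_k(\alpha_k) + (1-\epsilon-p_1) q_k(\boldsymbol{\alpha})$ is a non-decreasing function of $\alpha_k$, which is the desired conclusion. I do not anticipate a real obstacle here: the only subtlety is making precise that $q_k$ does not depend on $\alpha_k$ (a direct consequence of the definition) and justifying monotonicity of $h_k$ without relying on a specific closed form for $b_{k,n}(\alpha_k)$. The envelope/feasibility argument above handles the latter uniformly for any cache-hit-maximizing placement policy used within a category.
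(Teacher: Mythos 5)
Your proof is internally correct, but it establishes a different statement from the one the paper actually proves, and the difference is substantive. You read the lemma as partial monotonicity: vary $\alpha_k$ with all other $\alpha_i$ held fixed, observe that $q_k(\boldsymbol{\alpha})$ (whose defining sum excludes $i=k$) is then constant, and reduce everything to the monotonicity of $h_k(\alpha_k)$ via a nested-feasible-set argument. The paper instead keeps the total budget $\sum_i \alpha_i = M$ fixed, so increasing $\alpha_k$ by $\delta$ forces some other allocation $\alpha_u$ to decrease by $\delta$; consequently $q_k(\boldsymbol{\alpha})$ \emph{does} change --- it decreases --- and the lemma becomes a trade-off statement: the gain $p_1\,\Delta h_k$ dominates the loss $(1-\epsilon-p_1)\,|\Delta q_k|$. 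Establishing that requires the popularity-dominance inequalities $p_1 > 1-\epsilon-p_1$ and $a_{k,1} > \tfrac{1}{N-N_k}$ together with a first-order Taylor expansion of $(1-b)^j$, none of which your argument needs. Each version buys something: yours is cleaner, fully rigorous (no Taylor approximation, no assumption on $\gamma^{\text{out}}$), and is in fact the version invoked in the proof of Lemma~\ref{lemma2}, where $\delta$ is added to $\alpha_k$ precisely because the budget constraint is slack and the other components are untouched; the paper's version is the one relevant to the dual-variable exchange subproblem \eqref{eq:subprob_obj}--\eqref{eq:subprob_const2}, where cache genuinely moves between two categories. If you intend your proof to stand in for the paper's, you should at least note explicitly that you are treating $\boldsymbol{\alpha}$ as unconstrained in the $k$-th coordinate, and that the reallocation version (which the paper needs elsewhere, at least implicitly) does not follow from your argument without the additional popularity assumptions. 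One minor point on your $h_k$ step: the cache-hit-optimal policies of the cited references impose $\sum_n b_{k,n} = \alpha_k$ with equality, so the feasible sets are not literally nested; you should add the one-line observation that any policy with budget $\alpha_k$ can be extended to budget $\alpha_k' > \alpha_k$ (for $\alpha_k' \le N_k$) by raising some $b_{k,n} < 1$, which cannot decrease the objective --- this is exactly the perturbation $b'_{k,1} = b^*_{k,1} + \delta$ the paper uses.
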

	\begin{proof}
		In this proof, we simply use the notation of $b_{i,n} = b_{i,n}(\alpha_i),~\forall i\in \mathcal{K},~\forall n\in \mathcal{N}_i$. 
		Let $\alpha'_k = \alpha_k + \delta$ and $\delta > 0$. 
		Then, $h^*_k(\alpha'_k) \geq \tilde{h}_k(\alpha'_k)$, where $\tilde{h}_k (\alpha'_k)$ can be the cache hit probability within $\mathcal{C}_k$ of any  caching policy $\mathbf{b}'_k = (b'_{k,1}, \cdots, b'_{k,N_k})^T$ satisfying $\sum_{n=1}^{N_k} b'_{k,n} = \alpha'_k$.
		Let $\mathbf{b}'_k = (b'_{k,1}=b^*_{k,1}+\delta, b'_{k,2} = b^*_{k,2} \cdots, b'_{k,N_k} = b^*_{k,N_k})^T$.
		Then, since $0 \leq b_{k,1}^* \leq 1$ and $b_{k,1}^*$ is generally much closer to zero than one when the library size of $N$ is large,
		\begin{align}
		&p_1 h^*_k(\alpha'_k) - p_1 h_k^*(\alpha^*_k) \geq p_1 \tilde{h}_k(\alpha'_k) - p_1 h_k^*(\alpha^*_k) \nonumber \\ 
		&=p_1 \sum_{n=1}^{N_k} a_{k,n} \sum_{j=0}^{\infty} \mathrm{Pr}_{\Phi}\{J=j \} \{ (1-b^*_{k,n})^j - (1-b'_{k,n})^j \} \\
		&\approx p_1 \cdot a_{k,1}  \sum_{j=0}^{\infty} \mathrm{Pr}_{\Phi}\{J=j \} \cdot j\cdot \delta \label{eq:p_kh_k_last}
		\end{align}
		is obtained by using the first-order Taylor approximation, i.e., $(1-b^*_{k,n})^j \approx 1-j\cdot b^*_{k,n}$.
		
		Since the storage size $M$ is fixed, the cache size allocated to all categories except for $\mathcal{C}_k$ is $M-\alpha_k-\delta$.
		With small $\delta$, there exists any category $u$ for $u\neq k$ such that $\alpha_u \geq \delta$.
		Then, let $\alpha'_u = \alpha^*_u - \delta$ and $b'_{u,1} = b^*_{u,1} - \eta_1$, $b'_{u,2} = b^*_{u,2} - \eta_2$, $\cdots$, $b'_{u,N_u} = b^*_{u,N_u} - \eta_{N_u}$, where $0 \leq \eta_{u,n} \leq b^*_{u,n}$ for all $n\in \mathcal{N}_u$ and $\sum_{n=1}^{N_u} \eta_{u,n} = \delta$.
		In this case, cache allocations for other categories can remain unchanged, i.e., $\alpha'_i = \alpha^*_i$ and $b'_{i,n} = b^*_{i,n}$ for all $i \in \mathcal{K},~i\neq k,u$.
		Then, similar to before,
		\begin{align}
		&(1-\epsilon-p_1) q^*_k(\boldsymbol{\alpha'}) - (1-\epsilon-p_1) q^*_k(\boldsymbol{\alpha}) \geq  \nonumber \\
		&~ \frac{1-\epsilon-p_1}{N-N_k} \bigg[ \sum_{\substack{i=1\\i\neq k}}^{K} \sum_{n=1}^{N_i} \sum_{j=0}^{\infty} \mathrm{Pr}\{ J=j \} \times \nonumber \\ 
		&~~~~~~~~~~~~~~~~~~~~~~\Big\{ ( 1- b_{i,n}^* )^j - ( 1- b'_{i,n} )^j \Big\}   \bigg] \\
		&~\approx -\frac{1-\epsilon-p_1}{N-N_k} \sum_{j=0}^J \mathrm{Pr}\{ J=j \} \cdot j \cdot \delta.
		\end{align}
		
		Since $p_1 > 1-\epsilon-p_1$ and $a_{k,1} > \frac{1}{N-N_k}$, $p_1 h^*_k(\alpha'_k) + (1-\epsilon-p_1) q^*_k(\boldsymbol{\alpha}') - p_1 h^*_k(\alpha_k^*) - (1-\epsilon-p_1) q^*_k(\boldsymbol{\alpha}^*) > 0$ and the above lemma is finally proved.
	\end{proof}
	
	\begin{algorithm}[t!]
		\caption{Greedy cache allocation algorithm
			\label{algo}}
		\begin{algorithmic}[1]
			\State{$\alpha^*_i = \frac{M}{K}$ for all $i \in\mathcal{K}$ and $p_{\text{hit}}^{\star} = 0$}
			\For{$\forall (u,v) \in \mathcal{K}\times \mathcal{K}$ and $u\neq v$}{
				\State{$\beta_{u,v} = M-\alpha^*_u - \alpha^*_v$}
				\State{Obtain $b_{i,n}~\forall i\in\mathcal{K}$ and $i\neq u,v$, and $\forall n \in \mathcal{N}_i$ according to \cite{ICC2015blaszczyszyn,CL2017chen}.}
				\For{$\forall \alpha_u \in \{ \max(0, \beta_{u,v}-N_v), \cdots, \min(\beta_{u,v},N_u) \} $}{
					\State{$\alpha_v \leftarrow \beta_{u,v} - \alpha_u$}
					\State{Obtain $b_{u,n}$ and $b_{v,m}$ $\forall n\in \mathcal{N}_u$ and $\forall m\in \mathcal{N}_v$ according to \cite{ICC2015blaszczyszyn,CL2017chen}.}
					\State{Find $p_{\text{hit}}^*$ based on $\boldsymbol{\alpha}$.}
					\If{$p_{\text{hit}}^{\star} < p_{\text{hit}}^*$} $p_{\text{hit}}^{\star} = p_{\text{hit}}^*$ \EndIf
				}
				\EndFor
			}
			\EndFor
		\end{algorithmic}
	\end{algorithm}
	\begin{lemma}
		The optimum vector $\boldsymbol{\alpha}^{\star} = (\alpha_1^{\star}, \cdots, \alpha_K^{\star})^T$ satisfies $\sum_{i=1}^K \alpha_i^{\star} = M$.
		\label{lemma2}
	\end{lemma}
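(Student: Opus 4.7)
The plan is a proof by contradiction: assume $\sum_{i=1}^K \alpha_i^{\star} < M$ and construct a strictly better feasible allocation, contradicting optimality. Set $\delta = M - \sum_{i=1}^K \alpha_i^{\star} > 0$. Since by assumption $N > M$ and $\alpha_i^{\star} \leq N_i$, we cannot have $\alpha_j^{\star} = N_j$ for every $j$ (that would force $M \geq \sum_j N_j = N > M$), so there exists at least one category $j$ with $\alpha_j^{\star} < N_j$. Pick any $\delta' \in (0, \min(\delta, N_j - \alpha_j^{\star})]$ and define a new allocation $\boldsymbol{\alpha}'$ identical to $\boldsymbol{\alpha}^{\star}$ except that $\alpha_j' = \alpha_j^{\star} + \delta'$. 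This $\boldsymbol{\alpha}'$ remains feasible under \eqref{eq:opt1_const1}--\eqref{eq:opt1_const2}.

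Next I would show that every summand of $p_{\text{hit}}^*$ in \eqref{eq:cache_hit_final} strictly increases. Writing $x_k(\boldsymbol{\alpha}) = p_1 h_k^*(\alpha_k) + (1-\epsilon-p_1) q_k^*(\boldsymbol{\alpha})$, the term for preferred category $k$ is $g_k^*(\boldsymbol{\alpha},f_k,\epsilon) = \frac{f_k(1-\epsilon)\epsilon \, x_k}{1-(1-\epsilon)x_k}$, whose derivative in $x_k$ is $\frac{f_k(1-\epsilon)\epsilon}{(1-(1-\epsilon)x_k)^2} > 0$; thus it suffices to prove $x_k(\boldsymbol{\alpha}') > x_k(\boldsymbol{\alpha}^{\star})$ for every $k$. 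Two cases arise. For $k=j$: $q_j^*(\boldsymbol{\alpha})$ by \eqref{eq:q_k} depends only on $\{\alpha_i\}_{i\neq j}$, so it is unchanged, while allowing a strictly larger budget for the inner probabilistic placement in $\mathcal{C}_j$ produces $h_j^*(\alpha_j^{\star}+\delta') > h_j^*(\alpha_j^{\star})$ (feasible caching vectors for the smaller budget form a strict subset of those for the larger one, and one can place the extra mass on the most popular uncached content in $\mathcal{C}_j$, strictly lowering $(1-b_{j,n})^j$ for some $n$ and $j\geq 1$). For $k\neq j$: $h_k^*(\alpha_k)$ is unchanged, while in $q_k^*(\boldsymbol{\alpha})$ the inner sum over $i\neq k$ includes $i=j$, and by the same strict-monotonicity argument on $h_j^*$ (restricted to category $j$) that contribution strictly increases, so $q_k^*(\boldsymbol{\alpha}') > q_k^*(\boldsymbol{\alpha}^{\star})$.

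Combining, $x_k(\boldsymbol{\alpha}') > x_k(\boldsymbol{\alpha}^{\star})$ for all $k\in\mathcal{K}$, and hence $p_{\text{hit}}^*(\boldsymbol{\alpha}') > p_{\text{hit}}^*(\boldsymbol{\alpha}^{\star})$, contradicting the assumed optimality of $\boldsymbol{\alpha}^{\star}$. Therefore $\sum_{i=1}^K \alpha_i^{\star} = M$.

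The main obstacle I anticipate is the strict monotonicity of $h_k^*(\alpha_k)$ and of the $j$-th inner term of $q_k^*(\boldsymbol{\alpha})$ in $\alpha_j$. The monotone (non-strict) direction is immediate from a feasibility-inclusion argument, but strictness requires exhibiting an explicit improved placement, as in Lemma~\ref{lemma1}: push the extra $\delta'$ of caching mass onto the most popular content that is not yet fully cached and invoke the first-order expansion of $(1-b_{j,n})^j$ to get a strictly positive gain, which in turn relies on $\mathrm{Pr}_{\Phi}\{J=j\}$ having support on $j\geq 1$ (otherwise no user ever finds a caching node and the problem is degenerate). Once this strictness is secured, the rest of the argument is a direct chain of monotone dependencies.
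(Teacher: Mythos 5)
Your proof is correct and follows essentially the same route as the paper's: assume the budget is not exhausted, add the slack to a category that still has room, and derive a contradiction from strict monotonicity of $p_{\text{hit}}^*$. The paper simply cites Lemma~\ref{lemma1} for the monotonicity step, whereas you supply the details it leaves implicit --- the existence of a category with $\alpha_j^{\star}<N_j$ (via $N>M$), the observation that $q_j^*$ does not depend on $\alpha_j$ while $q_k^*$ for $k\neq j$ strictly improves, and the non-degeneracy requirement that $\mathrm{Pr}_{\Phi}\{J\geq 1\}>0$ --- so yours is a more careful rendering of the same argument.
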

	\begin{proof}
		Assume that $\sum_{i=1}^K \alpha^{\star}_i < M$, then $\exists \delta > 0$ such that $\sum_{i=1}^K \alpha^{\star}_i + \delta \leq M$ and $\alpha^{\star}_k + \delta < \min\{M,N_k\}$ for certain $k$. 
		Let $\boldsymbol{\alpha}' = (\alpha^{\star}_1, \cdots, \alpha_k^{\star}+\delta, \cdots, \alpha_K^{\star} )^T$.
		According to Lemma 1, $g_k^*(\boldsymbol{\alpha},f_k,\epsilon)$ is increasing with $\alpha_k$ for all $k\in\mathcal{K}$.
		Thus, $p_{\text{hit}}^*(\boldsymbol{\alpha}') > p_{\text{hit}}^*(\boldsymbol{\alpha}^{\star})$ and it obviously leads to contradiction.
	\end{proof}
	According to Lemma \ref{lemma2}, an inequality constraint \eqref{eq:opt1_const1} can be converted into the equality constraint.
	The problem of \eqref{eq:opt1_obj}--\eqref{eq:opt1_const2} has $K$ optimization parameters, and the subproblem for finding the optimal $\alpha_u^{\star}$ and $\alpha_v^{\star}$ is formulated as follows:
	\begin{align}
	\{\alpha^{\star}_u, \alpha^{\star}_v \} &= \underset{\alpha_u, \alpha_v}{\arg \max}~\mathcal{M}_{(u, v)} \label{eq:subprob_obj} \\
	&\text{s.t.}~\alpha_u + \alpha_v = \beta_{u,v} = M - \sum_{i=1,i\neq u,v}^{K} \alpha_i \label{eq:subprob_const1} \\
	&~~~~~0\leq \alpha_i \leq \min\{ M, N_i \},~\forall i \in \mathcal{K}, \label{eq:subprob_const2}
	\end{align}
	where $\mathcal{M}_{(u, v)} = g^*_u(\boldsymbol{\alpha}, f_u, \epsilon) + g^*_v(\boldsymbol{\alpha}, f_v, \epsilon)$.
	Since $\{\alpha_i \}_{i\neq u,v}$ are fixed, $\alpha_u+\alpha_v$ also becomes a constant $\beta_{u,v}$.  
	
	A multivariable function $p_{\text{hit}}^*$ can be optimized by iteratively optimizing the subset of variables if the convergence is guaranteed. 
	To find $\boldsymbol{\alpha}^{\star} = (\alpha_1^{\star},\cdots,\alpha_K^{\star} )$, the subproblem of \eqref{eq:subprob_obj}--\eqref{eq:subprob_const2} can be iteratively applied for all combinations of $u$ and $v$, for $u,v \in \mathcal{K}$ and $u\neq v$.
	We find the maximum of the dual-variable problem of \eqref{eq:subprob_obj}--\eqref{eq:subprob_const2} in each iteration, and the sequence of the updated values of $\mathcal{M}_{(u,v)}$ is generated. 
	Since this sequence is non-decreasing and the cache hit probability has a trivial upper bound of 1, i.e., $p_{\text{hit}}^{\star} \leq 1$, the convergence of the iterative algorithm is guaranteed.
	
	Since $h_k(\alpha_k)$ and $q_k(\boldsymbol{\alpha})$ are obtained by using the bisection method \cite{ICC2015blaszczyszyn,CL2017chen}, however, the objective function of $\mathcal{M}_{(u,v)}$ is not in closed-form and the problem of \eqref{eq:subprob_obj}--\eqref{eq:subprob_const2} should be numerically handled.
	Therefore, we consider integer values for cache allocations of $\boldsymbol{\alpha}$ and the greedy algorithm can solve the problem with $M$ not very large. 
	If caching of content partitions is not considered, i.e., only caching of the whole content is allowed, the assumption that $\alpha_i$ is the integer number for all $i\in\mathcal{K}$ is reasonable.
	The details of the iterative algorithm to solve the problem of \eqref{eq:opt1_obj}--\eqref{eq:opt1_const2} are described in Algorithm 1.
	
	\section{Maximization of expected number of consecutive content requests}
	\label{sec:max_video_number}
	
	From the service provider's perspective, it is advantageous for the user to consume as many contents as possible.
	As explained in Section \ref{sec:system_model}, the user does not request the next content with the probability of $\epsilon$. 
	In addition, we assume that the user stops to consume the next content when no caching node in the vicinity of the user stores the desired content, even though the user requests the next one.
	
	The probability of stopping to consume more is given by
	\begin{equation}
	p^k_{\text{stop}} = \epsilon + p_1 (1 - h_k(\alpha_k)) + (1-\epsilon-p_1) (1 - q_k(\boldsymbol{\alpha})). \label{eq:stop_watching}
	\end{equation}
	In \eqref{eq:stop_watching}, the first term is the probability of not requesting the next content, the second and third terms are probabilities that no caching node stores the requested content when the content belongs to $\mathcal{C}_k$ and is not in $\mathcal{C}_k$, respectively.
	Then, the expected number of consecutive content consumption is computed as
	\begin{align}
	\mathbb{E}[L] &= \sum_{l=1}^{\infty} l \cdot \mathrm{Pr}\{L=l\} = \sum_{k=1}^K f_k \sum_{l=1}^{\infty} l \cdot (1-p^k_{\text{stop}})^l p^k_{\text{stop}} \\
	&= \sum_{k=1}^K f_k \frac{1-p^k_{\text{stop}}}{p^k_{\text{stop}}}.
	\end{align}
	
	Then, the optimization problem of maximizing the expected number of consecutive video consumption is as follows:
	\begin{align}
	&\boldsymbol{\alpha}^{\star} = \underset{\boldsymbol{\alpha}}{\arg \max}~\mathbb{E}[L] \label{eq:opt2_obj} \\
	&~~~\text{s.t.}~\sum_{i=1}^K \alpha_i \leq M \label{eq:opt2_const1} \\
	&~~~~~~~~0\leq \alpha_i \leq \min\{M, N_i\},~\forall i\in \mathcal{K}. \label{eq:opt2_const2}
	\end{align}
	Similar to Lemmas \ref{lemma1} and \ref{lemma2}, $p^k_{\text{stop}}$ can be proved to be increasing with $\alpha_k$ and the inequality constraint \eqref{eq:opt2_const1} can be converted into the equality constraint, i.e., $\sum_{i=1}^K \alpha_i = M$.
	
	Again, the multivariable function $\mathbb{E}[L]$ can be maximized by iteratively optimizing the following dual-variable subproblem:
	\begin{align}
	&\{\alpha^{\star}_u, \alpha^{\star}_v \} = \underset{\alpha_u, \alpha_v}{\arg \max}\frac{f_u}{p^u_{\text{stop}}} + \frac{f_v}{p^v_{\text{stop}}} \label{eq:subprob2_obj} \\
	&~~\text{s.t.}~\alpha_u + \alpha_v = \beta_{u,v} = M - \sum_{i=1, i\neq u,v}^{K} \alpha_i \label{eq:subprob2_const1} \\
	&~~~~~~~0\leq \alpha_i \leq \min\{ M, N_i \},~\forall i \in \mathcal{K}. \label{eq:subprob2_const2}
	\end{align}
	
	The sequence of the updated objective values in \eqref{eq:subprob2_obj} is nondecreasing, and $\mathbb{E}[L] \leq \frac{1}{\epsilon}-1$ because $p^k_{\text{stop}} \geq \epsilon$.
	Thus, the algorithm which solves the problem of \eqref{eq:opt2_obj}--\eqref{eq:opt2_const2} by iteratively optimizing the dual-variable problem of \eqref{eq:subprob2_obj}--\eqref{eq:subprob2_const2} for all combinations of $u,v\in\mathcal{K}$ and $u\neq v$, is guaranteed to converge.
	The whole algorithm is the same as Algorithm \ref{algo} except that $p_{\text{hit}}^*$ should be changed into $\mathbb{E}[L]$ in lines 2, 9, 10.
	
	\section{Numerical Results}
	\label{sec:numerical_results}
	
	In the subsequent simulations, $N=100$ contents and $K=5$ categories are considered. 
	The global category popularity follows $f_i > f_j$ for $i<j$.
	Caching nodes are distributed according to a Poisson point process with intensity of $\lambda$ and caching nodes with distances less than $d=10$ from the user are only considered.
	In addition, $M=30$, $\gamma=1$, $\gamma_i^{in}=2.4$, and $c_i^{\text{in}}=69$ are used for all $i\in \mathcal{K}$.
	We consider three different category structures as follows:
	\begin{itemize}
		\item Case A: $N_1=N_2=N_3=N_4=N_5=20$ 
		\item Case B: $N_1=35, N_2=25, N_3=20, N_4=15, N_5=5$ 
		\item Case C: $N_1=5, N_2=15, N_3=20, N_4=25, N_5=35$.
	\end{itemize}
	
	\begin{figure} [t!]
		\centering
		\includegraphics[width=0.3\textwidth]{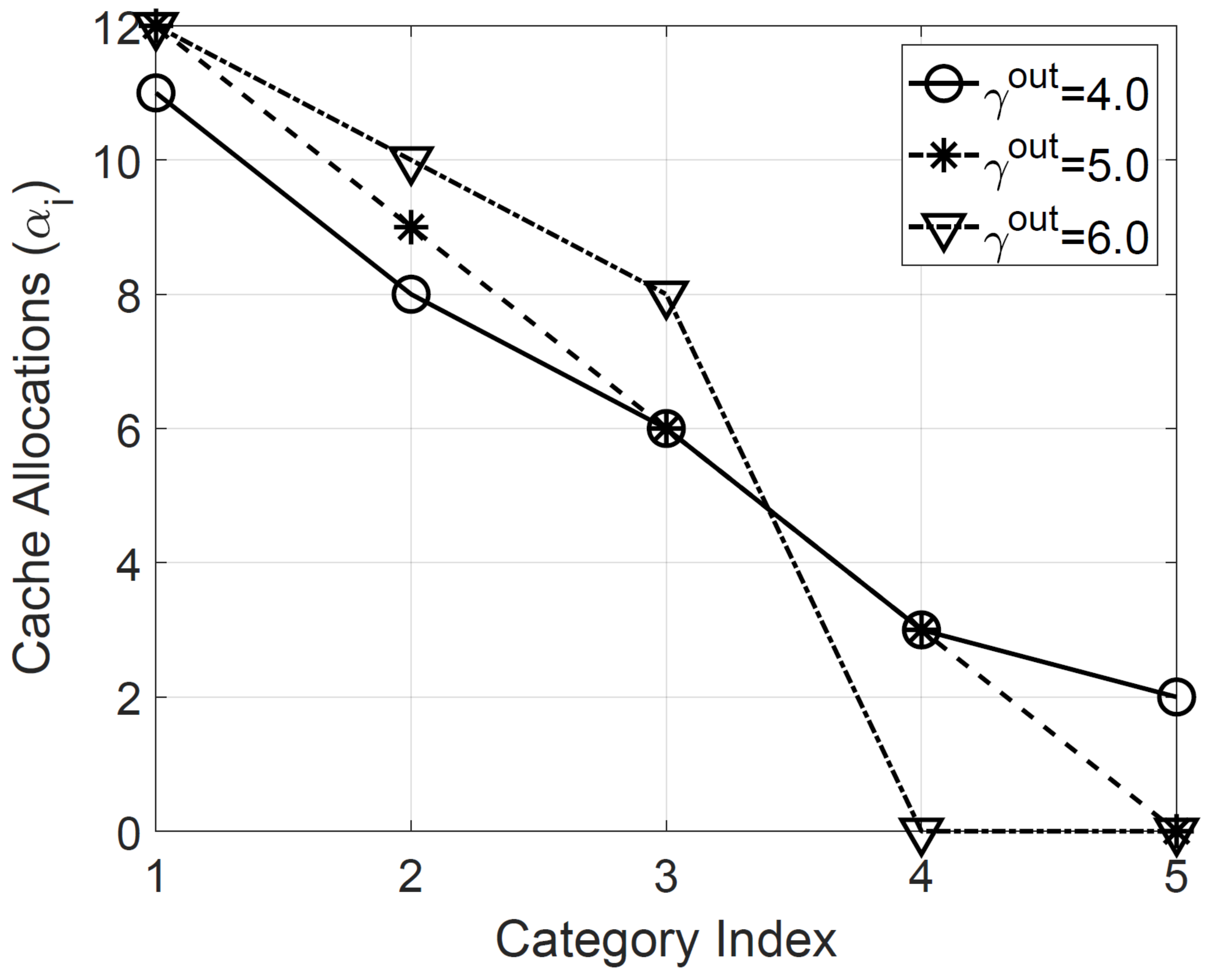}
		\caption{Cache allocations depending on the skewness factor}
		\label{fig:cache_allocation_gam}
	\end{figure}
	
	\begin{figure} [t!]
		\centering
		\includegraphics[width=0.3\textwidth]{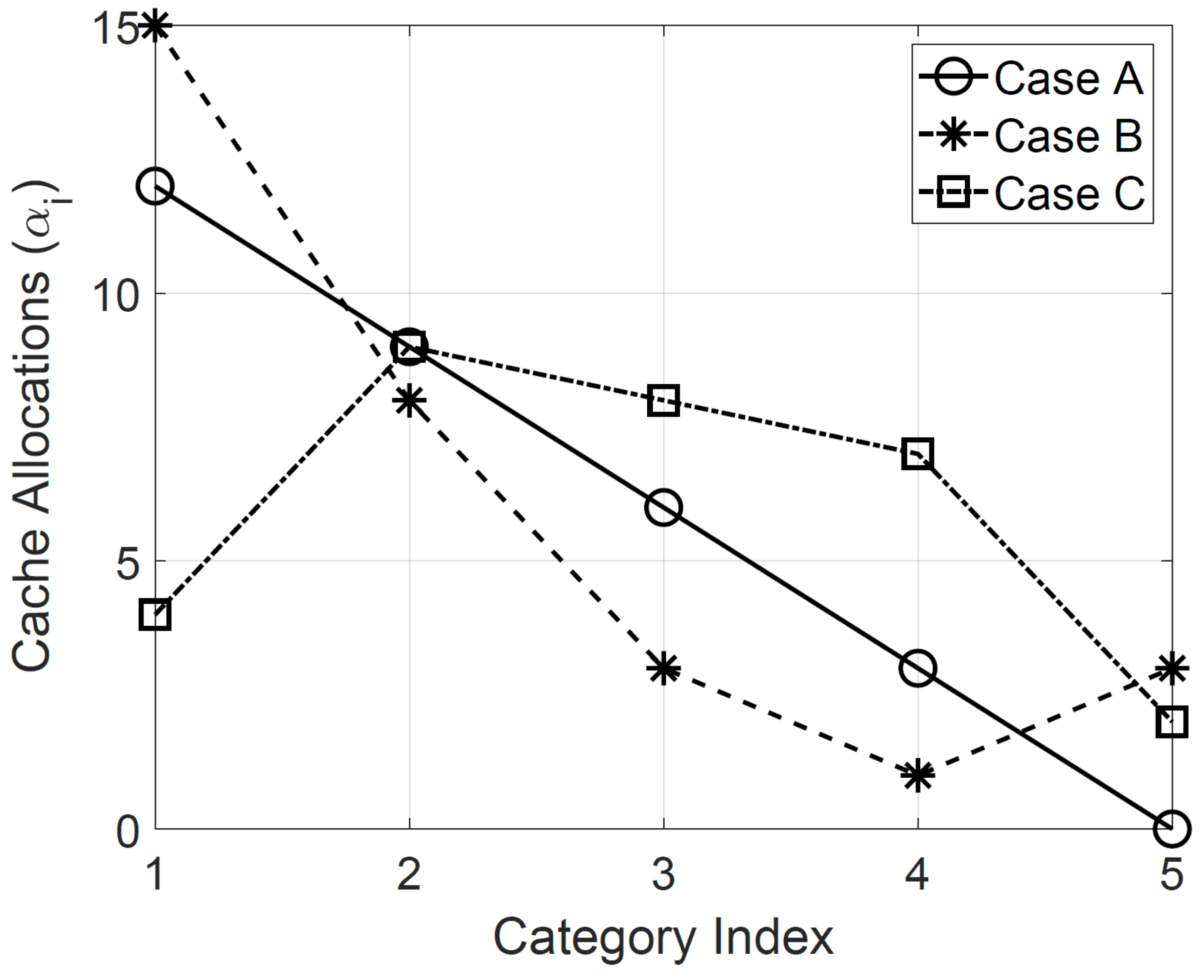}
		\caption{Cache allocations depending on the number of contents in each category}
		\label{fig:cache_allocation_case}
	\end{figure}
	
	In Figs. \ref{fig:cache_allocation_gam} and \ref{fig:cache_allocation_case}, plots of $\alpha_k$ for every category are shown with $\lambda=0.02$ and $\epsilon=0.1$. 
	Case A is considered in Fig. \ref{fig:cache_allocation_gam}.
	As the skew factor $\gamma^{\text{out}}$ grows, the probability of requesting the content in the preferred category becomes much larger than that of requesting the content in other categories.
	Therefore, as $\gamma^{\text{out}}$ increases, more cache sizes are allocated to categories having relatively large global category popularities in Fig \ref{fig:cache_allocation_gam}.
	
	In Fig. \ref{fig:cache_allocation_case}, all plots are obtained with $\gamma^{\text{out}}=5$.
	Since all categories in Case A have the same number of contents, cache allocations of Case A depend only on global category popularity. 
	In Case B, the category having a larger global popularity consists of more contents, therefore more cache sizes are allocated, i.e., $\alpha_1$ in Case B becomes larger than that in Case A. 
	Interestingly, $\alpha_5$ in Case B is also larger than that in Case A. 
	The reason is that $N_5$ is the smallest in Case B, i.e., the individual content popularity within $\mathcal{C}_5$ is the largest among all categories. 
	Thus, even though $f_5$ is smaller than other $f_i$ values, caching multiple contents of $\mathcal{C}_5$ is favorable for consecutive content requests. 
	On the other hand, in Case C, $\alpha_1$ is smaller than $\alpha_2$, $\alpha_3$ and $\alpha_4$ because $N_1=5$ and $\alpha_1$ should be smaller than $N_1$.
	It does not mean that an importance of caching contents in $\mathcal{C}_1$ decreases.
	Rather, it becomes more important because a portion of contents to be stored in caching nodes, i.e., $\frac{\alpha_1}{N_1}$, is larger than other cases.
	By saving the cache size for $\mathcal{C}_1$, a larger cache size can be allocated to other categories with low global popularities compared to Case A.
	Thus, Figs. \ref{fig:cache_allocation_gam} and \ref{fig:cache_allocation_case} show that the skew factor as well as the number of contents in each category have a strong impact on the proposed cache allocation rule.
	
	Fig. \ref{fig:hit_lamList} shows plots of cache hit probabilities obtained from the problem of \eqref{eq:opt1_obj}--\eqref{eq:opt1_const2} versus $\lambda$. 
	In Fig. \ref{fig:L_lamList}, the expected numbers of consecutive content consumption obtained from the problem of \eqref{eq:opt2_obj}--\eqref{eq:opt2_const2} are shown.
	We compared the proposed scheme with the conventional caching method optimized for one-shot content request based on popularity of individual contents in \cite{ICC2015blaszczyszyn,CL2017chen}.
	The comparison scheme is named as `L1' in the figures.
	We can easily see in both figures that the proposed scheme outperforms `L1' with different valuess of $\gamma^{\text{out}}$ and $N_i$ for each category. 
	As $\lambda$ grows, i.e., as the number of caching nodes in the vicinity of the user grows, the performance improvement of the proposed scheme decreases, because the user becomes more likely to find caching nodes to deliver multiple requested contents even with `L1'.
	The performance gain of the proposed scheme over `L1' is guaranteed when $\gamma^{\text{out}}$ is large.
	Especially in Fig. \ref{fig:L_lamList}, when $\epsilon=0.1$, $\epsilon$ dominates the term in \eqref{eq:stop_watching} representing the probability of stopping to consume contents; therefore, the advantage of the proposed scheme is not remarkable.
	As $\epsilon$ becomes smaller, however, the proposed algorithm is more advantageous for consecutive content consumption than `L1'.
	Thus, the service provider can create the opportunity for users to consume more contents and to stay in the service longer by using the proposed scheme.
	
	\begin{figure} [t!]
		\centering
		\includegraphics[width=0.32\textwidth]{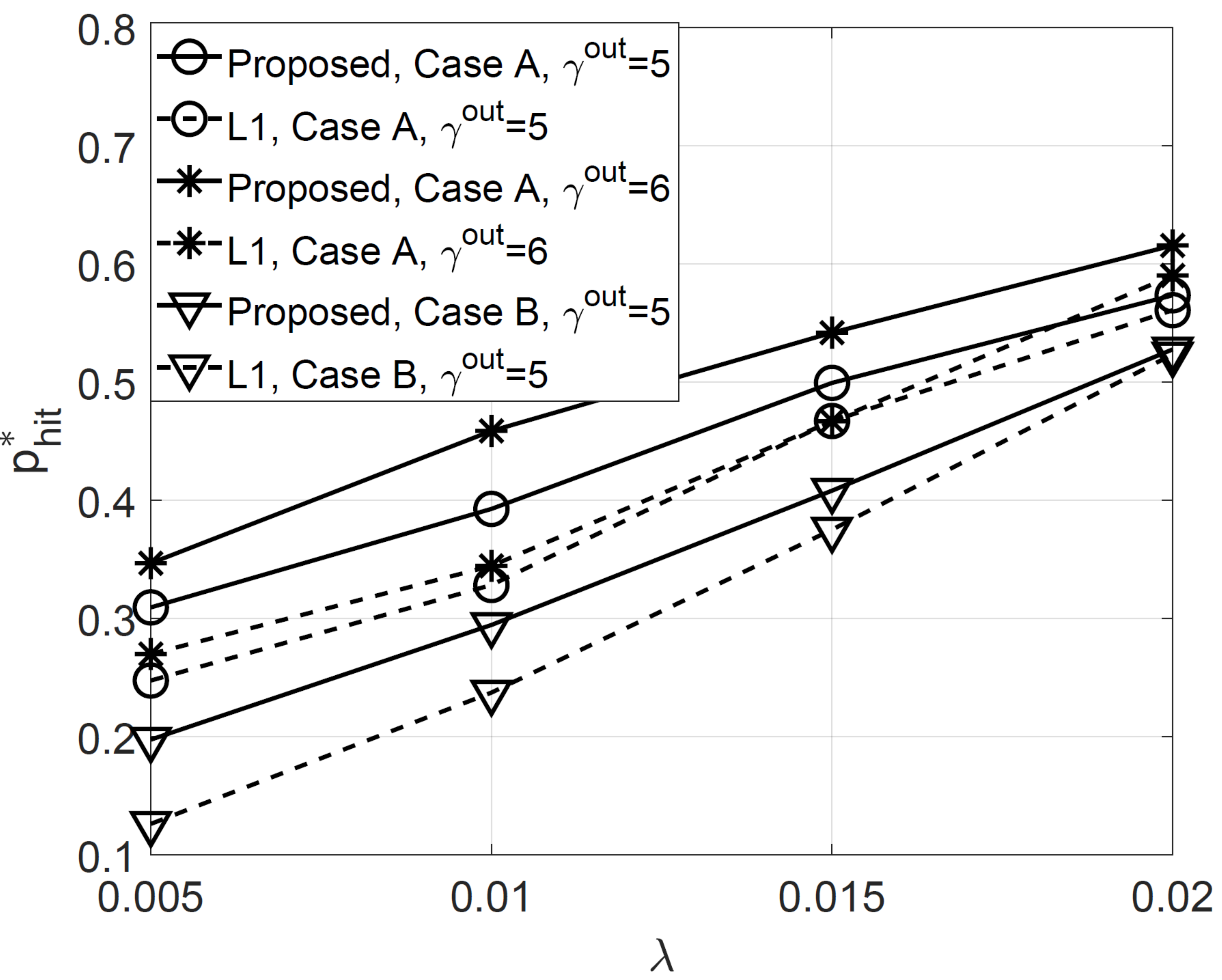}
		\caption{The expected cache hit probabilities}
		\label{fig:hit_lamList}
	\end{figure}
	
	\begin{figure} [t!]
		\centering
		\includegraphics[width=0.32\textwidth]{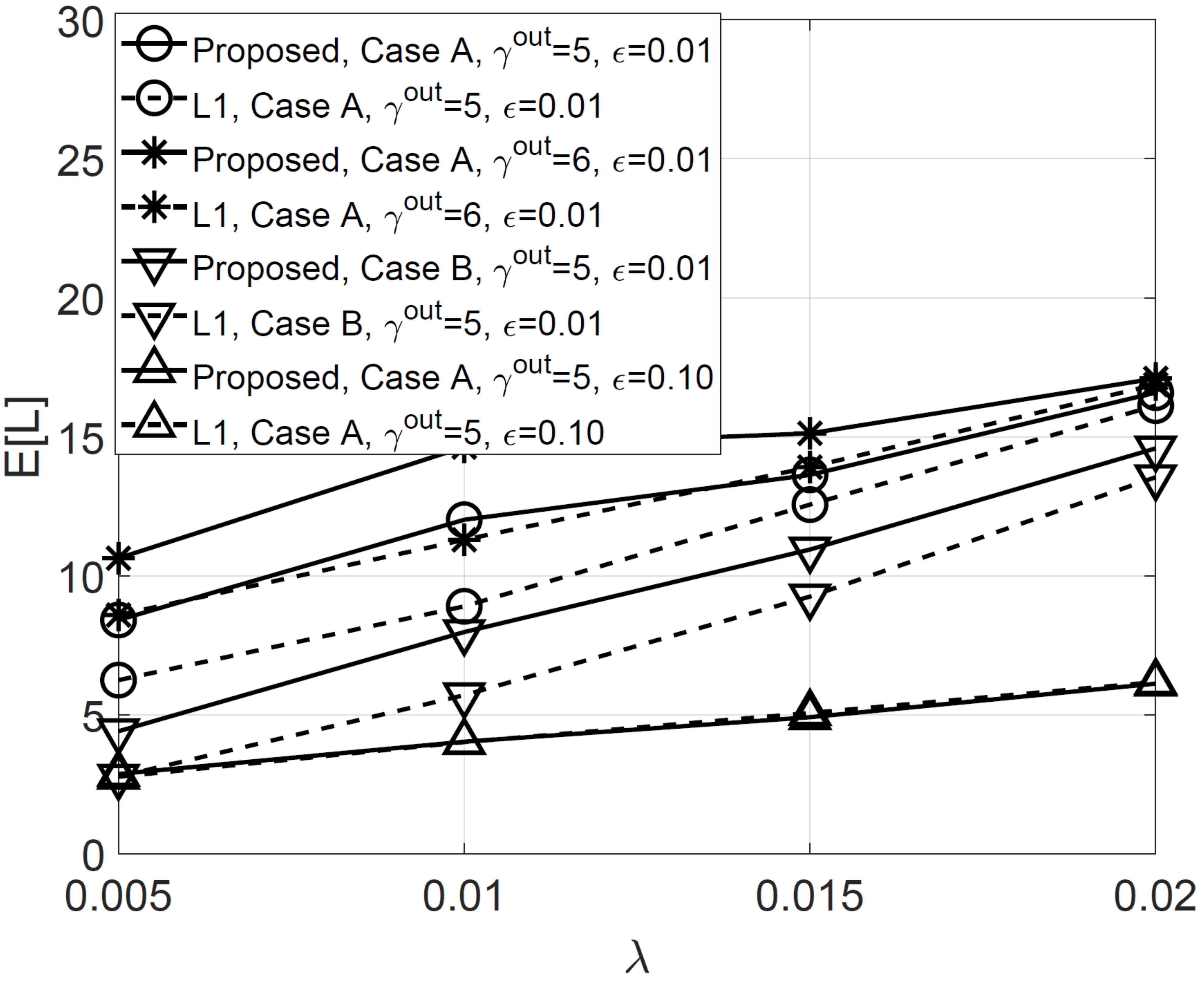}
		\caption{The expected number of consecutive content consumption}
		\label{fig:L_lamList}
	\end{figure}
	
	\section{Concluding Remarks}
	\label{sec:conclusion}
	
	This paper proposes two optimal cache allocation rules when users request a random number of contents consecutively. 
	The key characteristic that users are likely to consume content highly related to each other consecutively is well captured in the proposed scheme by maximizing the cache hit probability for multiple content requests from the same category. 
	Another cache allocation which maximizes the number of consecutive content consumption is also proposed as it related to the benefits for the service providers.
	The impacts of categorized contents and consecutive content requests on the cache allocation rule are shown by numerical results.
	
	\vspace{-1mm}
	\section*{Acknowledgment} 
	This work was supported by NSF under projects NSF
	CCF-1423140 and NSF CNS-1816699, and Institute for Information \&
	Communications Technology Promotion (IITP) grant funded by the Korea government (MSIT) (No.2018-0-00170, Virtual Presence in Moving Objects through 5G).

	
	\vspace{12pt}
	

\begin{thebibliography}{1}
		
		
		\bibitem{youtube}
		X. Cheng, J. Liu, and C. Dale, ``Understanding the Characteristics of Internet Short Video Sharing: A YouTube-based Measurement Study," \textit{IEEE Trans. on Multimedia}, vol. 15, no. 5, pp. 1184--1194, August 2013.
		
		
		\bibitem{femtocaching}
		N. Golrezaei, K. Shanmugam, A. G. Dimakis, A. F. Molisch, and G. Caire, ``FemtoCaching: Wireless Video Content Delivery through Distributed Caching Helpers," in \textit{Proc. IEEE INFOCOM}, Orlando, FL, USA, 2012.
		
		
		
		\bibitem{TIT2013shanmugam}
		K. Shanmugam, N. Golrezaei, A. G. Dimakis, A. F. Molisch, and G. Caire, ``FemtoCaching: Wireless Content Delivery Through Distributed Caching Helpers," \textit{IEEE Trans. on Inf. Theory}, vol. 59, no. 12, pp. 8402--8413, December 2013.
		
		\bibitem{ICC2015blaszczyszyn}
		B. Blaszczyszyn and A. Giovanidis, ``Optimal Geographic Caching in Cellular Networks," in \textit{Proc. IEEE Int'l Conf. on Communications (ICC)}, London, UK, 2015.
		
		\bibitem{CL2017chen}
		Z. Chen, N. Pappas, and M. Kountouris, ``Probabilistic Caching in Wireless D2D Networks: Cache Hit Optimal Versus Throughput Optimal," \textit{IEEE Commun. Letters}, vol. 21, no. 3, pp. 584--587, March 2017.
		
		
		
		\bibitem{JSAC2016ji}
		M. Ji, G. Caire, and A. F. Molisch, ``Wireless Device-to-Device Caching Networks: Basic Principles and System Performance," \textit{IEEE J. Sel. Areas in Commun.}, vol. 34, no. 1, pp. 176--189, Jan. 2016.
		
		\bibitem{TWC2018Li}
		K. Li, C. Yang, Z. Chen and M. Tao, ``Optimization and Analysis of Probabilistic Caching in $N$ -Tier Heterogeneous Networks," \textit{IEEE Trans. Wireless Commun.}, vol. 17, no. 2, pp. 1283-1297, Feb. 2018.
		
		\bibitem{JSAC2018Choi}
		M. Choi, J. Kim, and J. Moon, "Wireless Video Caching and Dynamic Streaming Under Differentiated Quality Requirements," \textit{IEEE J. Sel. Areas in Commun.}, vol. 36, no. 6, pp. 1245--1257, June 2018.
		
		\bibitem{TWC2019Ko}
		D. Ko, B. Hong and W. Choi, ``Probabilistic Caching Based on Maximum Distance Separable Code in a User-Centric Clustered Cache-Aided Wireless Network," \textit{IEEE Trans. Wireless Commun.}, vol. 18, no. 3, pp. 1792-1804, March 2019.
		
		\bibitem{ToN2009Cha}
		M. Cha, H. Kwak, P. Rodriguez, Y. Ahn and S. Moon, ``Analyzing the
		Video Popularity Characteristics of Large-Scale User Generated Content
		Systems," \textit{IEEE/ACM Trans. Network.}, vol. 17, no. 5, pp.
		1357-1370, Oct. 2009.
		
		\bibitem{IMC2010Zhou}
		R. Zhou, S. Khemmarat, and L. Gao, ``The impact of YouTube
		recommendation system on video views" in \textit{Proc. ACM
		IMC}, New York, NY, USA, 404-410, 2010.
		
		\bibitem{ICC2019Choi}
		M. Choi, D. Kim, D.-J. Han, J. Kim and J. Moon, ``Probabilistic Caching Policy for Categorized Contents and Consecutive User Demands", \textit{IEEE Int'l Conf. on Communications (ICC)}, May 2019.
		
		\bibitem{ToN2019Lee}
		M. Lee, A. F. Molisch, N. Sastry and A. Raman, ``Individual Preference Probability Modeling and Parameterization for Video Content in Wireless Caching Networks," \textit{IEEE/ACM Trans. Network.}, 2019.
		
		
	\end{thebibliography}
\end{document}